\definecolor{RedClr}{rgb}{1,0,0}
\definecolor{BlueClr}{rgb}{0,0,1}
\definecolor{TextColor}{rgb}{0,0,0.5}
\definecolor{Violet}{rgb}{0.5,0,1}
\definecolor{Bordeaux}{rgb}{1,0.3,0.4}
\newcommandx{\MW}[2][1=]{\todo[linecolor=red,backgroundcolor=red!25,bordercolor=red,#1]{#2}}
\newcommandx{\ES}[2][1=]{\todo[linecolor=blue,backgroundcolor=blue!25,bordercolor=blue,#1]{#2}}
\newcommandx{\JJ}[2][1=]{\todo[linecolor=violet,backgroundcolor=violet!25,bordercolor=blue,#1]{#2}}
\begin{document}

\title*{Forman's Ricci curvature - From networks to hypernetworks}
\author{Emil Saucan$^\ast$ and Melanie Weber$^\ast$} 
\institute{Emil Saucan$^\ast$ \at ORT Braude College of Technology, Karmiel 2161002, Israel \email{semil@braude.ac.il}
\and Melanie Weber$^\ast$ \at Princeton University, Princeton NJ 08540, United States \email{mw25@math.princeton.edu} 
\and $^\ast$: co-first authors}

%
%
\maketitle


\abstract{Networks and their higher order generalizations, such as hypernetworks or multiplex networks are ever more popular models in the applied sciences. However, methods developed for the study of their structural properties go little beyond the common name and the heavy reliance of combinatorial tools. We show that, in fact, a geometric unifying approach is possible, by viewing them as polyhedral complexes endowed with a simple, yet, the powerful notion of curvature -- the Forman Ricci curvature. We systematically explore some aspects related to the modeling of weighted and directed hypernetworks and present expressive and natural choices involved in their definitions. A benefit of this approach is a simple method of structure-preserving embedding of hypernetworks in Euclidean $N$-space. Furthermore, we introduce a simple and efficient manner of computing the well established Ollivier-Ricci curvature of a hypernetwork. 
}

\section{Introduction}\label{sec:intro}
Networks are popular means for modeling (pairwise) relationships between elements in a complex system. While the simplicity of such models allows for efficient representation and computational analysis, its ability to represent complex relations is limited. In the context of applications in the social and biological sciences, we often encounter relationships between groups of elements or systems with different types of elements that can participate in a specified set of relationship types. As an effective model for such higher-order relationships, hypernetworks have recently attracted increasing interest in the network science community. However, most network analysis tools are targeted to classic networks and do not readily transfer to the case of hypernetworks.

In the present work, we introduce geometric tools for the analysis of hypernetworks. We suggest a parametrization that allows for the application of curvature-based tools~\cite{WSJ,SSWJ} to hypernetworks, allowing for an efficient analysis of their structure and intrinsic geometry. The introduced methods are based on a discrete notion of Ricci curvature~\cite{Fo} whose applicability for network analysis and data science applications has recently been studied by the authors and collaborators for a wide range of network data~\cite{WSJ,WSJ2,SSWJ,SVJSS,FR-brains}.

We suggest representing hypernetworks as weighted cell complexes, more specifically as simplicial or polyhedral complexes. In this parametrization, faces encode relationships between groups of nodes; for example, triangles may represent relationships between triples of nodes. Note, that we distinguish between \emph{pairwise} relationships, as encoded in the classic networks model, and \emph{higher-order} relationships. For instance, tetrahedra represent higher-order relationships between quadruples of nodes, whereas quadrangles denote pairwise relationships between four nodes. For practical purposes, simplicial complexes are the most suitable parametrization, as they are easy to analyze and topologically robust~\cite{CB,R++}. However, this simplified model falls short of capturing relationships among groups of more than three vertices. In~\cite{WSJ} we demonstrate empirically, that such relationships occur frequently in real-world data sets, creating a need for higher-order models that incorporate these structures. We will, therefore, introduce a parametrization that represents hypernetworks as polyhedral complexes instead. 

We have touched on these ideas and their significance for modeling complex systems in~\cite{WSJ} and~\cite{SJ} and discussed some of its geometrical and topological implications. Here, we will extend these ideas and discuss (potential) applications. 

We would like to thank J{\"u}rgen Jost and Areejit Samal for helpful discussions.

\section{Hypernetworks as Polyhedral Complexes}\label{sec:param}
In classic graphs, an edge connects exactly two nodes, i.e. it represents a \emph{pairwise} relation. Here, we consider the more general notion of \emph{hypernetworks} that allows for modeling relations between groups of nodes.

\subsection{Parametrizing hypernetworks as polyhedral complexes}
\begin{definition}[Hypernetworks]
We define a \emph{hypernetwork} as a {\it hypergraph} $\mathcal{H} = (\mathcal{V},\mathcal{E})$ with the {\it node set} $V$ partitioned into {\it type sets}, i.e. $\mathcal{V} = (V_1,\ldots,V_p)$, and hyperedges $\left[ \left( v_1, \dots, v_n	\right), \left(	v_1, \dots, v_m	\right) \right] \in E$ connecting groups of nodes.
\end{definition}
Since for $m=n=1$, one returns to the classical case, it is natural to assign to each pair of nodes appertaining to a hyperedge a segment/edge. This observation gives rise to a natural parametrization of a hyperedge with $m$/ $n$-dimensional simplices, forming a simplicial complex $S$. For instance, we can parametrize parent-child relationships as a directed complex (see Figure \ref{fig:HyperNet}). Note that the resulting complex is simplicial, if and only if we only allow connections between nodes of the same type. Otherwise, the resulting complex is polyhedral. Thus, hypernetworks can be parametrized as {\it bi-partite} simplicial or polyhedral complexes. 
\begin{figure}[h] \label{fig:HyperNet}
	\begin{center}
		\includegraphics[width=.75\columnwidth]{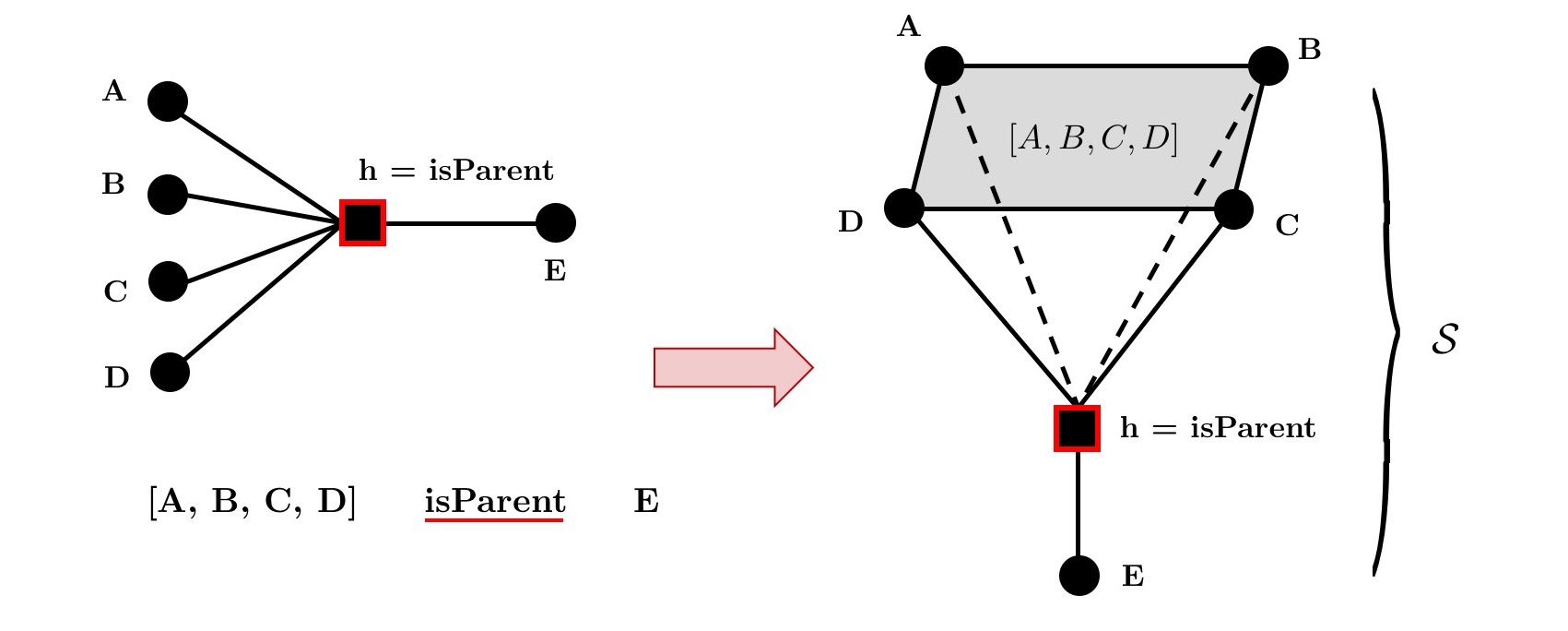}
	\end{center}
	\caption{Parametrization of hypernetworks as simplicial complexes. A child-parent relationship is modeled as a hyperedge $h$: $A$, $B$, $C$ and $D$ are parents of $E$ (left). We represent the co-parent relationship as a face $\left[	A , B, C, D \right]$ and represent $h$ as a simplicial complex $\mathcal{S}=\left[	A, B, C, D, E\right]$ (right). }
\end{figure}
A closely related model for hypernetworks is that of bipartite, directed graphs, albeit with the restriction that edges only connect nodes of the same type. This model does not necessarily capture higher-dimensional (geometric) information. In the simplicial/ polyhedral complex model, such information can be encoded in the weights of faces and higher order simplices. We will discuss such parametrizations in section~\ref{sec:Weights}. Related work on such representations includes the {\it clique expansion} and the {\it star expansion} -- see, e.g. \cite{Tu+}. 
An additional advantage of this model arises from the {\it structure-preserving} low-dimensional embedding of a simplicial complex with {\it indecomposable edges} (see \cite{Tu+}), that is not readily available for bipartite graphs. To understand this observation, let us introduce the following notation: Let ${\rm dim}(\mathcal{V}) = \max\{{\rm dim}(V) \,|\, V \in \mathcal{V}\}$, ${\rm dim}(\mathcal{E}) = \max\{{\rm dim}(E) \,|\, E \in \mathcal{E}\}$ denote the maximal dimension of the hyperedges, and define the {\it dimension} of the hypernetwork $\mathcal{H}$ as ${\rm dim}(\mathcal{H}) = \max\{{\rm dim}(\mathcal{V}), {\rm dim}(\mathcal{E})\}$. 
We can now state the following
\begin{theorem}
	Any hypernetwork has a structure-preserving embedding as an Euclidean complex in $\mathbb{R}^{2{\rm dim}(H)+1}$. 
\end{theorem}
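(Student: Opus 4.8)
The plan is to realize the hypernetwork as an abstract simplicial (or polyhedral) complex and then invoke the classical Menger--N\"obeling embedding theorem, which states that any $k$-dimensional (finite or even countable) simplicial complex admits a geometric realization in $\mathbb{R}^{2k+1}$. Setting $k = {\rm dim}(\mathcal{H})$ immediately yields the target dimension $2\,{\rm dim}(H)+1$. So the proof is really a matter of (i) passing from the hypernetwork to a bona fide abstract complex of the correct dimension, and (ii) checking that the standard embedding is \emph{structure-preserving} in the relevant sense (it respects the face relations, the node types, and the directedness/weights carried by the complex).

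First I would make precise the abstract complex $K$ associated to $\mathcal{H}$: by the parametrization described above, each hyperedge $\left[(v_1,\dots,v_n),(v_1,\dots,v_m)\right]$ is represented by a simplex (if the participating nodes share a type) or by a product cell / polyhedron (if the types differ), and these glue along shared faces into a bipartite polyhedral complex. The key numerical input is that every such cell has dimension at most ${\rm dim}(\mathcal{H})$, by the very definition ${\rm dim}(\mathcal{H}) = \max\{{\rm dim}(\mathcal{V}),{\rm dim}(\mathcal{E})\}$. For the polyhedral (non-simplicial) case I would first barycentrically subdivide each polyhedral cell into simplices; subdivision does not increase dimension, so the resulting simplicial complex $K'$ still has $\dim K' \le {\rm dim}(\mathcal{H})$, and it carries the same face-incidence and type information.

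Next I would apply the general-position / moment-curve argument underlying the Menger--N\"obeling theorem. Concretely, choose for each vertex of $K'$ a point on the moment curve $t\mapsto (t,t^2,\dots,t^{2k+1})\in\mathbb{R}^{2k+1}$, or more robustly, place the vertices in general position in $\mathbb{R}^{2k+1}$ and extend affinely over each simplex. General position guarantees that any $2k+2$ vertices are affinely independent, which is exactly the count needed so that two simplices of dimension $\le k$ can only meet in a common face and never in a spurious crossing. This gives a piecewise-linear homeomorphism from $|K'|$ onto its image, i.e.\ a genuine Euclidean complex. I would then note that the subdivision can be reversed (or the weights/types transported along the embedding) so that the image is a structure-preserving realization of the original hypernetwork: the combinatorial incidence data, the bipartite type-labeling, and the face weights all pull back verbatim, and directedness is preserved since the embedding acts only on the underlying geometry.

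The main obstacle is not the dimension count, which is classical, but pinning down exactly what \emph{structure-preserving} must mean here and verifying it survives the two reductions (polyhedral$\to$simplicial via subdivision, and simplicial$\to$Euclidean via general position). In particular one must check that the general-position embedding respects the product/polyhedral structure of the mixed-type cells rather than only the simplicial faces of the subdivision, and that the weight and orientation data descend correctly; handling the mixed-type (non-simplicial) cells carefully, and confirming that subdivision followed by embedding yields an honest \emph{polyhedral} (not merely simplicial) realization of $\mathcal{H}$, is where the real work lies.
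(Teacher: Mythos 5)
Your proposal is correct and takes essentially the same route as the paper: both reduce the statement to the classical result that any $n$-dimensional simplicial or polyhedral complex embeds in $\mathbb{R}^{2n+1}$, applied with $n = {\rm dim}(\mathcal{H})$. The paper simply cites this result (attributing it to Flores and van Kampen) and stops there, whereas you additionally unfold its standard general-position/moment-curve proof, the barycentric-subdivision reduction for non-simplicial cells, and the structure-preservation bookkeeping --- details the paper's two-line proof leaves entirely implicit.
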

\begin{proof}
By a classical result in $PL$-Topology (\cite{Fl,vK}), any $n$-dimensional polyhedral complex admits a (combinatorial) embedding in $\mathbb{R}^{2n+1}$. Therefore, the hypernetwork $H$, viewed as a complex, admits an embedding in the lowest dimension in which all of its constituting components can be embedded, namely ${\rm dim}(H)$. $\Box$
\end{proof}


\subsection{Choice of weights}
\label{sec:Weights}
%
The problem of choosing appropriate weights of the variously dimensional faces has two different, yet complementary aspects. The first is empirical: The choice depends on the specific application to be modeled, and it is the resort of the practitioners to decide on the relevant features and information to be encoded. The second is the geometric aspect that studies the choice of weights from a purely mathematical viewpoint. In this section, we discuss how to produce expressive weights for edges and higher dimensional faces from geometric and empirical information.

For \emph{combinatorial weights}, i.e for the standard allocation of weight one to each edge, one also canonically assigns weight equal to one to each face, in any dimension. We do not concentrate on this model here, since we consider it overly simplistic, given that it captures only the basic underlying topology and very little of the geometry of a (hyper)network. 

For choosing edge weights in the \emph{general case}, one can either settle for generalized weights in Forman's framework~\cite{Fo} or  metricize the graph. More expressive metrics than the combinatorial one include the consideration of both edge weights viewed as abstractions of lengths and node weights as measures concentrated at the nodes. A number of choices are available, mainly the {\it path degree metric} \cite{DK} and the well known Wasserstein distance $W_1$ (see e.g. \cite{Vi}). 

The case of higher dimensional faces is more delicate, as uniqueness is usually not guaranteed. Our choice of weights introduced below is purely geometric and extends to any dimension without imposing thresholds. Related approaches that rely on thresholding (see, e.g.,~\cite{R++}) introduce additional degrees of freedom in the choice of the threshold value.
	
    For 2-dimensional faces in simplicial complexes (triangles), the obvious geometric choice is that of the area. Moreover, since such complexes are usually envisioned as {\it piecewise flat}, i.e. composed of Euclidean triangles, etc., in a metrized graph, the area (weight) of any face $T$ can be easily  expressed as a function of the weights (lengths) of the edges, via the {\it Heron Formula}:
    \begin{equation} \label{eq:Heron}
    w_2(T) = \sqrt{p(p-w_1)(p-w_2)(p-w_3)}, 
    \end{equation}
    where $p = (w_1+w_2+w_3)/2$, and where $w_1,w_2,w_3$ represent the weights (lengths) of the edges.
    %
    %
    However, {\emph piecewise} linear ($PL$) does not necessarily mean piecewise {\it flat} and, indeed network embeddings in Hyperbolic space have been considered \cite{BR}, \cite{Kr++},  
    wherein hyperbolic 2-simplices 
    represent the faces. However, the area of a hyperbolic triangle is, as opposed to the Euclidean one, a function of the {\it angle defect}, i.e., $S(T) = \pi -(\alpha + \beta + \gamma)$, for any triangle $T$ of angles $\alpha,\beta,\gamma$. Therefore, a simple expression as a function of the are cannot be constructed. We suggest the following angle-based expression (see, e.g., \cite{Ja}):
    \begin{equation} \label{eq:HeronHyp}
    w_2(T_{hyp}) = 2\arcsin{\frac{1}{2}\frac{\sqrt{\sinh{p}\sinh(p-w_1)\sinh(p-w_2)\sinh(p-w_1)}}{\cosh\frac{w_1}{2}\cosh\frac{w_2}{2}\cosh\frac{w_3}{2}}}.
    \end{equation}
    Spherical embeddings are used more rarely, but a corresponding formalism can be derived using the following relation from spherical geometry :
    \begin{equation} \label{eq:HeronSph}
    w_2(T_{sph}) = 2\arcsin{\frac{1}{2}\frac{\sqrt{\sin{p}\sin(p-w_1)\sin(p-w_2)\sin(p-w_1)}}{\cos\frac{w_1}{2}\cos\frac{w_2}{2}\cos\frac{w_3}{2}}}.
    \end{equation}
   Clearly, for more general faces, one can use their decomposition into triangles, and the additivity and unicity of the area function. 
   
  For general weights, simpler, combinatorial choice, such as $w_2(T) = w_1+w_2+w_3$ or $w_2(T) = w_1w_2w_3$ , or their scaled version (to dimensionally correspond to area) $w_2(T) = (w_1+w_2+w_3)^2$ (or $w_2(T) = (w_1w_2w_3)^{2/3}$,  respectively) may be employed. An example of such weights can be found in \cite{R++}. In this case, one can not use the decomposition into triangles to define the area of $p$-gons, with $p \geq 4$, but rather use a direct extension of the idea above, e.g for a quadrilateral $Q$, define its weight as $w_2(AQ) = w_1+w_2+w_3, w_4$, etc., where the $w_i$ are defined as before. This type of abstract faces weights (and their higher-dimensional counterparts -- see below) can be trivially generalized to the directed hypernetworks case. 

 The choice of weight $w_m(\tau)$ for an  $m$-dimensional ($m \geq 3$) Euclidean simplex $\tau$ of vertices $v_0,\ldots,v_m$ and edge lengths $d_{ij}; 0 \leq i,j \leq m$ arises naturally from the {\it Cayley-Menger determinant}:
   \begin{equation}                                \label{eq:CM}
   D(\tau) = D(v_0,v_1,\ldots,v_m) = \left| \begin{array}{ccccc}
   0 & 1 & 1 & \ldots & 1 \\
   1 & 0 & d_{01}^{2} & \ldots & d_{0m}^{2} \\
   1 & d_{10}^{2} & 0 & \ldots & d_{1m}^{2} \\
    \cdots & \cdots & \cdots & \cdots & \cdots \\
   1 & d_{m0}^{2} & d_{m1}^{2} & \ldots & 0
   \end{array}
   \right|\;,
   \end{equation}
   given the fact that
   \begin{equation}
   {\rm Vol}^2(\tau) = \frac{-1^{k+1}}{2^k(k!)^2}D(\tau)\,;
   \end{equation}
   (see \cite{Bl}). Thus one can either use $w_m(\tau)  =   {\rm Vol}(\tau)$ or simply put $w_m(\tau)  =   D(\tau)$. 
   This approach extends, like Heron's formula, to spherical and hyperbolic simplices, respectively, (see \cite{Bl}):
   \begin{equation} \label{eq:CM-Sph}
   \Delta(\tau) = \Delta(v_0,\ldots,v_n) = \det{(\cos{d_{ij}})}, 0 \leq i,j \leq m\;.;
   \end{equation}
   %
   %
   \begin{equation}  \label{eq:CM-Hyp}
   \mathcal{D}(\tau) = \mathcal{D}(v_0,\ldots,v_n) = \det{(\cosh{d_{ij}})}; 0 \leq i,j \leq m\;.
   \end{equation}
   %
 Again, one can either use the volume formula from each of the last two formulas, or simply set $w_m(\tau)  =   \Delta(\tau)$ (resp. $w_m(\tau)  =   \mathcal{D}(\tau)$). 
 
 In the case when one does not wish to proceed through the metrization process of the abstract, given weights, one can again introduce combinatorial weights, such as $w_{01} + \ldots + w_{m-1,m}$, or $w_1 \cdots w_{m-1,m}$; where $w_{0,1} \ldots  w_{m-1,m}$ denote the weights of the edges $e(v_0,v_1)$, etc.

   

\subsection{Directionality}
\label{sec:Orientation}
Directionality (or \emph{orientations}) of associations (interactions or commonalities) between elements is an important feature of complex data sets and hence has to be incorporated into network analysis tools. When analyzing higher-order relations, this even more important than in pairwise once, as it determines which complexes are "filled in", i.e., considered in the analysis.

To make this more clear, we start with the basic and most important case of triangles. In this case, there are two possible sets of orientations on the edges: Either the edges define a cycle (closed path), or there is a unique ``source'' and ``sink'' (see Fig.~\ref{Fig:TriangOrient}A). One can distinguish between the two cases by observing that in the acyclic case the information flow is disrupted, whereas in the cyclic case it is not. This approach generalizes readily to higher order $p$-gons (see Fig.~\ref{Fig:TriangOrient} and its accompanying caption).
\begin{figure}[h] \label{Fig:TriangOrient}
	\begin{center}
	\includegraphics[width=\columnwidth]{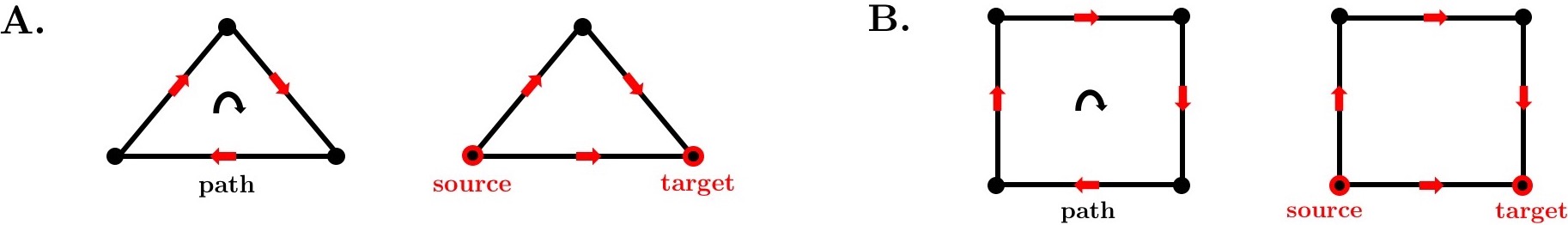}
   \end{center}	
	\caption{Orientations on the edges in \textbf{A} 3-faces (triangles) and \textbf{B} 4-faces (quadrangles). The edges define either a cycle (i.e., a closed path) or an acyclic path with a unique ``source'' and ``sink'' (see Fig.~\ref{Fig:TriangOrient}A).}
\end{figure}
From a purely geometric viewpoint, the ``correct'' choice is that of cycles. It has the advantage of allowing for the standard extension to higher dimensional simplices, e.g. tetrahedra, by filling-in only those with cyclic faces, while requiring, that adjacent faces have opposite orientations. While this process inductively extends to all dimensions, it clearly drastically reduces the number of oriented simplices considered in the analysis. In a practical context, the second option is more intuitive as it relates to information flow in the network (see, e.g. \cite{R++}, \cite{CB}). If one is interested in modeling the information flow in the network, one has to fill in only faces that form acyclic paths with a unique source and target. This is illustrated in Fig.~\ref{Fig:EdgeOrient}. Note, that for computing the Forman-Ricci curvature of a cell, one has to ascertain that its parents and children are uniquely defined (cf. Eq.~\ref{eq:Forman-ch1}). Thus one has to extend the genealogical correlation of nodes/faces to the directed case, see Fig.~\ref{Fig:EdgeOrient} below. 

\begin{figure}[h] \label{Fig:EdgeOrient}
	\begin{center}
	\includegraphics[width=.45\columnwidth]{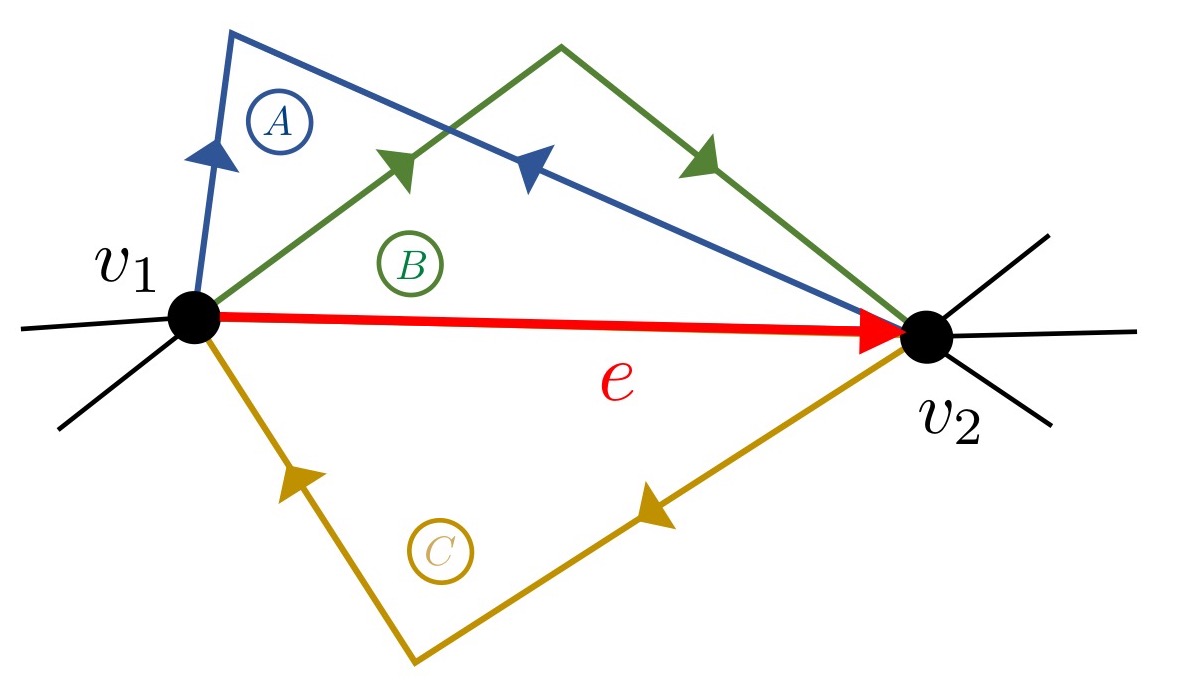}
	\end{center}
	\caption{When considering the information flow within the network, we have to decide which faces to "fill in'', i.e., which oriented triangles are relevant for our analysis. In the context of information flow within the system, one is interested in the parallel routing of information from a source to a target, therefore one should consider only triangles of form (B). However, if one is interested in the collection of information at the source and its spreading at the target, then both ``(B)'' and ``(C)'' triangles should be considered since they contribute to the information flow along the edge $e$. In contrast, triangle ``(A)''  does not contribute to the information flow, so it should be discarded.}
\end{figure}


\section{Discrete Curvatures for Hypernetworks}\label{sec:curvature}
\subsection{Forman-Ricci Curvature}
In this section, we will introduce the Forman Ricci curvature ${\rm Ric}_F(h)$ for hyperedges $h$. The curvature bounds, more so than the actual values, give insight into the structure of the hypernetwork. With the formalism introduced below, we compute global curvature bounds
\begin{align*}
{\rm Ric}^E_m = \min\{{\rm Ric}(E)\,|\, E \in \mathcal{E}\} \quad {\rm and} \quad {\rm Ric}^E_M = \max\{{\rm Ric}(E)\,|\, E \in \mathcal{E}\} \; ,
\end{align*}
which characterize (global) geometric and topological properties of the hypernetwork.

Of course, one could also compute the "graph" Forman curvature ${\bf F}(e)$ (i.e., the 1-dimensional, edge-based notion; see, e.g., ~\cite{WSJ}) for every edge in the complex parametrization and fit lower and upper bounds respectively:
\begin{align*}
{\bf F}^E_m = \min\{{\bf F}(E)\,|\, E \in \mathcal{E}\} \quad {\rm and} \quad
{\bf F}^E_M = \max\{{\bf F}(E)\,|\, E \in \mathcal{E}\} \; .
\end{align*}
These represent bounds for the information flow along the hyperedges. While this approach is computationally much simpler, it loses some of the dimensional information (encoded in the higher-dimensional relationships).

Thus one can employ Forman's Ricci curvature at two levels: Either at a small scale as the Forman curvature of the simplices (polyhedra) that represent the hyperedges of the network, or the "graph" Forman curvature $\mathbf{F}(e)$ of all edge in the network for an analysis of the coarse geometry. The two methods, used in tandem, should yield the most comprehensive understanding of a hypernetwork. 

Recall, that Forman's Ricci curvature for cell complexes is given as follow~\cite{Fo}:
\begin{definition}[Forman-Ricci curvature for weighted cell complexes]
For each $p$-cell $\alpha$ we define
\begin{small}
\begin{equation} \label{eq:Forman-ch1}
\hspace*{-0.75cm}
\mathcal{F}(\alpha^p) = \omega(\alpha^p)\Big[\Big(\sum_{\beta^{p+1}>\alpha^p}\frac{\omega(\alpha^p)}{\omega(\beta^{p+1})}\;
+ \sum_{\gamma^{p-1}<\alpha^p}\frac{w(\gamma^{p-1})}{w(\alpha^p)}\Big)\; 
\end{equation}
\[
\hspace*{3.2cm}
-\sum_{\alpha_1^p\parallel \alpha^p, \alpha_1^p \neq \alpha^p}\Big|\sum_{\substack{\beta^{p+1}>\alpha_1^p \\ \beta^{p+1}>\alpha^p}}\frac{\sqrt{\omega(\alpha^p) \omega(\alpha_1^p)}}{\omega(\beta^{p+1})}
%
- \sum_{\substack{\gamma^{p-1}<\alpha_1^p \\ \gamma^{p-1}<\alpha^p}}\frac{\omega(\gamma^{p-1})}{\sqrt{\omega(\alpha^p) \omega(\alpha_1^p)}}\Big|\:\;\Big]\,;
\]
\end{small}
\noindent where $\alpha < \beta$ indicates, that $\alpha$ is a \emph{parent} of $\beta$ and $\beta$ a \emph{child} of $\alpha$, and
$\alpha_1 \parallel \alpha_2$ that the simplices $\alpha_1$ and $\alpha_2$ are \emph{parallel}. For details, see~\cite{WSJ1} and the references therein.
\end{definition}
Weights are usually chosen to resemble intuitive geometric notions, such as length and volume. In practice, weights can often be constructed to incorporate meta-information. For instance, in a word network built from co-occurrences, the co-occurrence frequencies can be implemented in the edge weights. For the case of hypernetworks, we are interested in computing Forman's curvature for each hyperedge. Given a hyperedge $h$, let us denote the set of \textit{left neighbors} $V_l(h)$ as the set of all vertices in incoming direction with respect to $h$, and $V_r(h)$ as the set of \textit{right neighbors} in the outgoing direction, respectively (see Fig.~\ref{fig:HyperNet}). Then, one can define the Forman-Ricci curvature as the total amount of "information flow" through $h$:
\begin{definition}[Forman's curvature for hyperedges] 
Let $h=\left(	\lbrace	v_r^i \rbrace, \lbrace	v_l^j \rbrace \right)$ be a hyperedge with left neighbors $V_l(h)=\lbrace	v_l^i \rbrace$ and right neighbors $V_r(h)=\lbrace	v_r^j \rbrace$. 
\begin{align*}
{\rm Ric_{F}}(h) = {\rm Ric_{F,V_l(h)}}(h) - {\rm Ric_{F,V_r(h)}}(h)\, ,
\end{align*}
where ${\rm Ric_{F,V_l(h)}}$ measures curvature in the incoming and ${\rm Ric_{F,V_r(h)}}$ in the outgoing direction. For the incoming direction, we compute the Forman-Ricci curvature over the simplex $\mathcal{S} = \left[ 	\lbrace	v_l^i \rbrace, h	\right]$ (see Fig.~\ref{fig:HyperNet}) as
\begin{small}
\begin{align*}
{\rm Ric_{F,V_l(h)}}(h) &= \omega\left(\left[ v_1^1, ... , v_l^{|V_l(h)|}	\right]\right) \left(
\frac{\omega \left(\left[ v_1^1, ... , v_l^{|V_l(h)|} \right]\right)}{\omega \left(\mathcal{S}\right)} + \sum_{\substack{u,v \in V_l(h) \\ u \sim v }} \frac{\omega \left(e=(u,v) \right)}{\omega\left(\left[ v_1^1, ... , v_l^{|V_l(h)|}\right]\right)} \right. \\
	&- \left. \sum_{\substack{u,v \in V_l(h) \\ u \sim v }} \frac{\sqrt{\omega \left(\vartriangle_{u,v,h}\right) \omega\left(\left[ v_1^1, ... , v_l^{| V_l(h) |}	\right] \right) }}{\omega(\mathcal{S})} +   \sum_{\substack{u,v \in V_l(h) \\ u \sim v }}	\frac{\omega(e=(u,v))}{\sqrt{\omega \left(\vartriangle_{u,v,h}\right) \omega\left(\left[ v_1^1, ... , v_l^{|V_l(h)|}\right]\right)}} \right) \; .
\end{align*}
\end{small}
An analog expression can be derived for the outgoing direction.
\end{definition}

\subsection{Ollivier's Ricci Curvature}
We have focused on Forman's Ricci curvature so far, due to the intuitive geometric definition that makes it ideally suited for our parametrization of (hyper)networks as polyhedral complexes. In addition, its computational efficiency makes it ideally suited for large-scale applications.
However, one is naturally interested in extending Ollivier's Ricci curvature (see, e.g.~\cite{Vi}) from networks (where it has become an important tool and subject of research) to hypernetworks.  Unfortunately, generalizing the notion of optimal transport from node to node along edges to higher dimensional faces is far from straightforward. Recently, a new type of curvature, the {\it coarse scalar curvature}, was introduced~\cite{AGE} as a first extension of Ollivier's methods to hypernetworks. We show that, based on the parametrization of hypernetworks as polyhedral complexes, it is possible to extend Ollivier's curvature to hypernetworks in a more general setting. 

This extension is based on a simple, geometric idea: We pass from complex parametrization to its dual complex (see, e.g.~\cite{Hu}). For instance, for each two-dimensional face we can construct a corresponding (dual) node, and two such nodes are connected by an edge if their corresponding faces are adjacent in the complex. The volume (weight) of the original face is then chosen as the weight of the (dual) node. 

The basic idea is quite simple and natural, from a geometric viewpoint (and common in Topological Graph Theory and its applications). Namely, one passes to the dual of the polyhedral complex (network) (see. e.g. \cite{Hu}). For instance, to each top-dimensional face, there corresponds a node, and two such nodes are connected by a (dual) edge if their corresponding faces are adjacent in the given complex/hypernetwork. The measure (e.g. volume) of the original face/simplex is concentrated in the weight of the node. In most instances, the mass transport of interest is between the highest dimensional faces, but this approach can be applied for lower dimensional ones as well, and, after some adaptation, to transport between faces of different dimensions. 

More precisely, if $c^k_1,c^k_2$ are two adjacent $k$-cells of weights $w_1^k,w_2^k$,  let us denote their barycenters, i.e., their \emph{dual} nodes, by $n_1 = n^0_1, n_2 = n_2^0$ and a \emph{dual} edge $(n_1,n_2)$ connecting them by $e_{1,2}$. Furthermore, let $w_{12} = w^1(e_{12})$ be the weight of the edge $e_{12}$. To the dual nodes $n_1, n_2$ we assign the weight of the cells $c_1,c_2$, respectively, i.e. $w_1^0 = w(n^0_1) = w_1^k$ and $w_2^0 = w(n^0_2) = w_2^k$. There exist a number of possible choices for the (dual) edge weights. The simplest is a combinatorial weight of 1, whereas the most natural one from a geometric perspective is $w_{12} = w_{12}^{k-1}$, where $w_{12}^{k-1}$ represents the weight of $f_{12}^{k-1}$ -- the $(k-1)$-dimensional face common to $c_1$ and $c_2$. A further option is that of the Euclidean distance between the barycenters of the original faces in an $\mathbb{R}^N$-embedding of the hypernetwork (see also section \ref{sec:param}).

The (coarse) Ollivier curvature (~\cite{Ollivier}) of the edge $e_{12}$ is then given by 
\begin{equation}
\kappa(e_{12}) = \kappa(n_1,n_2) = 1 - \frac{W_1(n_1,n_2)}{d(n_1,n_2)}\,;
\end{equation}
where $W_1(n_1,n_2)$ represents the Wasserstein distance between the nodes $n_1,n_2$, which in the discrete case at hand is expressed as
\begin{equation}
\label{eq:Ollivier3}
W_1(w'_1, w'_2)=\inf_{\mu_{n_1,n_2}\in \prod(w_1, w_2)}\sum_{(m,n)\in V\times V}d(m, n)\mu_{n_1,n_2}(m,n) \; .
\end{equation}
Here, $\prod(w_1, w_2)$ is the set of probability measures $\mu_{n_1,n_2}$ that satisfy
\begin{equation}
\label{eq:Ollivier4}
\sum_{n \in V}\mu_{n_1,n_2}(m,n)=w_1(m), \,\,\sum_{m\in V}\mu_{n_1,n_2}(m,n)=w_2(n) \; ,
\end{equation}
where $w'_1,w'_2$ represent the {\em normalized} weights corresponding to the vertices $n_1,n_2$. Here $d$ denotes the metric corresponding to the choice of edge weights above.

%


\section{Applications}
\subsection{Applications in Network Analysis}
One of the main advantages of Forman's Ricci curvature resides in the fact that it reveals important algebraic-topological properties of multiplex networks~\cite{SJ}. Moreover, as we have argued above, these benefits extend to hypernetworks, thus allowing the exploration of the structure of such networks, in a manner akin to that of Persistent Homology. More precisely, positive Forman Ricci curvature implies that the first homology group $H_1(\mathcal{N}, \mathbb{R})$ vanishes. In fact, if Forman Ricci curvature is strictly positive, the fundamental group $\pi_1(\mathcal{N})$ is finite. This capability of determining properties of the homotopy, and not just the homology of a hypernetwork, represents an advantage of our approach over Persistent Homology. Note that both properties above generalize, essentially, to any dimension $n$, by passing to the higher-dimensional curvatures generalizing Ricci curvature and introduced by Forman \cite{Fo}.

Yet the benefits of viewing hypernetworks as polyhedral complexes and, in Forman's and Ollivier's Ricci curvature are not restricted to the detection of their core topological structure. Exploring the two curvatures in different dimensions allows for the understanding and classification of various types of networks, especially the differentiation of artificial versus natural hypernetworks. To these advantages, we have already added that of naturally producing structure-preserving embeddings into Euclidean space, which allow for a simpler and more intuitive understanding of these structures.
\subsection{Use case}
We want to give a brief illustration of the geometric tools introduced in section~\ref{sec:curvature}. For this, we computed the Forman-Ricci for a hypernetwork that models historic relationships between object-oriented programming languages (see Fig.~\ref{Fig:languages}). We parametrize the hypernetwork as polyhedral complexes (sec.~\ref{sec:param}). In the unweighted case, the Forman-Ricci curvature reduces to a function of the \emph{in} and \emph{out degree} (i.e., ${\rm Ric}_F (h) = {\rm deg}_{IN} (h) - {\rm deg}_{OUT} (h) \; $). This resembles a combinatorial measure of the information flow through the hyperedge.
\begin{figure}[h] \label{Fig:languages}
	\begin{center}
	\includegraphics[width=0.8\columnwidth]{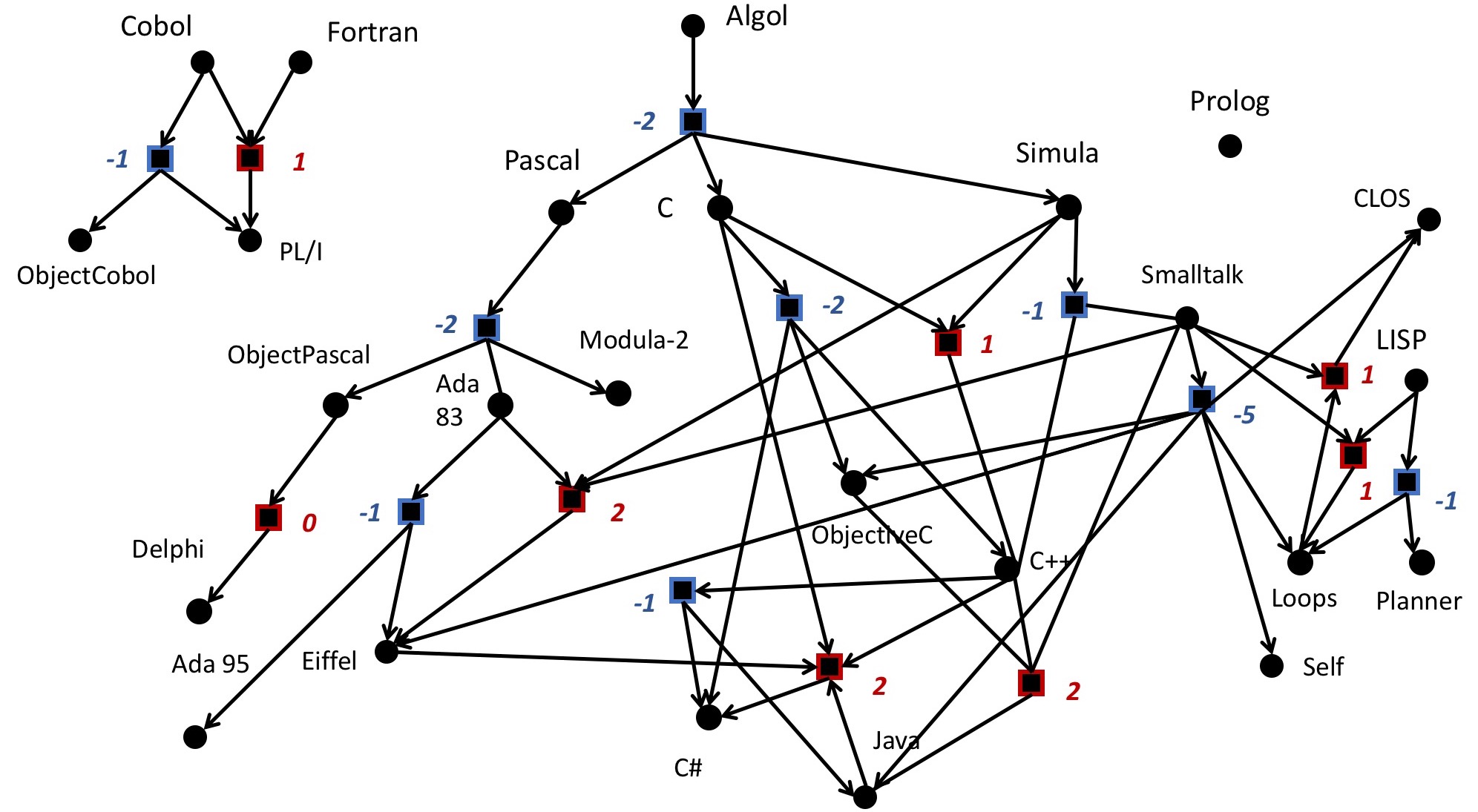}
	\end{center}
	\caption{Family tree of (object-oriented) programming languages as hypernetwork. Hyperedges that represent \emph{co-child} relationships are marked in blue, \emph{co-parent} relationships in red. Each hyperedge is labeled with its respective curvature value.}
\end{figure}

\section{Conclusions}
We introduced a parametrization of hypernetworks as polyhedral complexes that allows for studying higher order relations in complex systems with network-theoretic tools. Furthermore, we discuss discrete notions of curvature in the context of this parametrization and give an explicit formalism for computing these in practice. For the Forman-Ricci curvature, we demonstrate the computation on a small use case.

The present work is merely an exposition of theoretical aspects. Future directions include the empirical explorations of the introduced concepts, in particular, an empirical study of larger datasets from different branches of network science. Moreover, we would like to investigate systematically, how curvature-based analysis of hypernetworks can give insight into the coarse geometry and therefore global structural patterns of such objects. 

For this, we plan to systematically explore different type of weights and faces of varying dimension. This should include a statistical comparison of the results, to ascertain their relative efficiency, especially in comparison with other invariants and network characteristics. We also plan to further investigate the commonalities and differences for the approaches to Ricci curvature that we introduced, i.e. (i) 'graph' Forman curvature and 'complex' Forman curvature and (ii) Forman- and Ollivier-Ricci curvature for hypernetworks. In practice, Forman's curvature will likely be more applicable and scalable, due to its simple notion that avoids computationally expensive tasks, such as the computation of Wasserstein distances. 

%
%

\begin{thebibliography}{99.}%
\bibitem{AGE}
S. Asoodeh, T. Gao and J. Evans, 
Curvature of Hypergraphs via Multi-Marginal Optimal Transport
arXiv:1803.08584v1 [cs.IT], 2018.

\bibitem{BR}
Bianconi, G., Rahmede, C.: Emergent hyperbolic network geometry. Scientific Reports {\bf 7}, 41974 (2017).

\bibitem{Bl}
Blumenthal, L. M.: Distance Geometry -- Theory and Applications. Claredon Press, Oxford (1953)

\bibitem{Ch}
J. Cheeger, {\it A Vanishing Theorem for Piecewise Constant Curvature Spaces}, Lecture Notes in
Mathematics -- Proc. Katata 1985 {\bf 1201}, 33-40, 1986.

\bibitem{CB}
O. T. Courtney, O. T., Bianconi, G.: Dense power-law networks and simplicial complexes
Phys. Rev. E {\bf 97}, 052303 (2018)

\bibitem{DK}
Dodziuk, J., Kendall, W. S.: Combinatorial Laplacians and isoperimetric inequality. 
From local times to global geometry, control and physics, Pitman
Res. Notes Math. Ser. {\bf 150}, 68--74 (1986) .

\bibitem{Fl}	
Flores, A.: {\"U}ber n-dimensionale Komplexe, die im $R^{2n+ 1}$ absolut selbstverschlungen sind. Ergeb. Math. Kolloq. {\bf 34},  4--6 (1933)

\bibitem{Fo}
Forman, R.: Bochner's Method for Cell Complexes and Combinatorial Ricci Curvature, Discrete Comput. 
Geom. {\bf 29}(3), 323--374 (2003)

\bibitem{Hu}
Hudson, J. F.: Piecewise Linear Topology. 
Benjamin, New York (1969)

\bibitem{Ja}
Janson, S.: Euclidean, spherical and hyperbolic trigonometry. Notes, (2015) 
 
\bibitem{JL}
J. Jost, J.,  Liu, S.: Ollivier's Ricci curvature, local clustering and curvature-dimension inequalities on graphs,
Discrete Comput. Geom., {\bf 51}(2), 300--322 (2014).

\bibitem{Kr++}
Krioukov,D., Papadopoulos,F.,  Kitsak,M., Vahdat, A., Boguna, M.: Hyperbolic Geometry of Complex Networks. 
Phys. Rev. E {\bf 82}, 036106 (2010)

\bibitem{R++}
M. W. Reimann, M. W., et al.:
Cliques of Neurons Bound into Cavities Provide a Missing Link between Structure and Function
Front. Comput. Neurosci. (2017) 
doi.: 10.3389/fncom.2017.00048


\bibitem{SJ} 
Saucan, E., Jost, J.: Network Topology vs. Geometry: From Persistent Homology to Curvature. In: Proceedings of %
NIPS LHDS 2016, (2017)

\bibitem{SSWJ}
Saucan, E., Samal, A., Weber, M., Jost, J.: Discrete curvatures and network analysis.  MATCH Commun. Math. Comput. Chem. {\bf 20}(3), 605--622 (2018)	

\bibitem{SVJSS}
Sreejith, R.P., Vivek-Ananth, R. P., Jost, J., Saucan, E., and Samal, A.: Discrete Ricci curvatures for directed networks, preprint (2018)	

\bibitem{Tu+}
Tu, K, Cui, P. Wang, X., Wang, F., Zhu, W.: Structural deep embedding for hyper-networks, In: CoRR, vol. abs/1711.10146 (2017)

\bibitem{vK}
Van Kampen, E. R.: Komplexe in euklidischen Räumen. Abhandlungen aus dem Mathematischen Seminar der Universität Hamburg. \textbf{9}(1), 72--78 (1933)

\bibitem{Vi}
Villani, C.: Optimal Transport, Old and New. 
Springer, Berlin-Heidelberg (2009)

%
\bibitem{WSJ1}
Weber, M., Saucan, E., Jost, J.: Characterizing Complex Networks with Forman-Ricci curvature and associated geometric flows. J. Complex Netw. {\bf 5}(4), 527--550 (2017)

\bibitem{WSJ} 
Weber, M., Saucan, E., Jost, J.: Coarse geometry of evolving networks.  J. Complex Netw. (2017) doi: 10.1093/comnet/cnx049

\bibitem{WSJ2}
Weber, M., Jost, J., Saucan, E.: Forman-Ricci flow for change detection in large dynamics data sets. Axioms (2016) 5 (4), doi: 10.3390/axioms5040026.

\bibitem{FR-brains}
Weber, M., Stelzer, J., Saucan, E., Naitsat, A., Lohmann, G., Jost, J.: Curvature-based Methods for Brain Network Analysis (2017)
arXiv:1707.00180.

\bibitem{Ollivier}
Y. Ollivier: Ricci curvature of Markov chains on metric spaces, J. Funct. Anal. 256 (2009),
no. 3, 810-864.


 
\end{thebibliography}
%


\end{document}